\theoremstyle{theorem}
\newtheorem{theorem}{Theorem}
\theoremstyle{remark}
\newtheorem{lemma}{Lemma}
\newcommand{\expec}[1]{\mathbb{E}\big[#1\big]}
\newcommand{\p}{\mathds{P}}
\newcommand{\real}{\mathbb{R}}
\newcommand{\bin}{\operatorname{Binomial}}
\newcommand{\ber}{\operatorname{Bernoulli}}
\begin{document}

\preprint{APS/123-QED}

\title{Limited Parallelization in Gate Operations Leads to Higher Space Overhead and Lower Noise Threshold}

\author{Sai Sanjay Narayanan,  Smita Bagewadi and Avhishek Chatterjee}
\affiliation{
Indian Institute of Technology, Madras, Chennai, India
}

% \author{Ann Author}
%  \altaffiliation[Also at ]{Physics Department, XYZ University.}%Lines break automatically or can be forced with \\
% \author{Second Author}%
%  \email{Second.Author@institution.edu}
% \affiliation{%
%  Authors' institution and/or address\\
%  This line break forced with \textbackslash\textbackslash
% }%

\date{\today}% It is always \today, today,
             %  but any date may be explicitly specified

\begin{abstract}
In a modern error corrected quantum memory or circuit, parallelization of  gate operations is severely restricted due to issues like cross-talk \cite{Zhao2022quantum, Arute2019quantum, Sarovar2020detecting}. Hence, there are enough idle qubits not undergoing gate operations either during the computation phase or during the error correction phase, which suffer further decoherence while waiting.  Thus, in reality, the space overhead and the noise threshold would depend on the level of gate parallelization. In this paper, we obtain an analytical lower bound on the required space overhead in terms of the level of parallelization for an error correction framework that has more error correction capability than the existing ones. We consider two types of errors: i.i.d. erasure and depolarization. In comparison to the known lower bounds which assume full gate parallelization \cite{harrow2003robustness,razborov2004upper,kempe2008upper,fawzi2022lower,UthirakalyaniNC2023}, our bound is provably strictly larger despite allowing more capability to the error correction framework. This shows the steep price to be paid for lack of gate parallelization. An implication of the bound is that the noise or decoherence threshold, i.e., the noise beyond which no fault-tolerant memory or circuit can be realized, vanishes if the number of parallel gate operations does not scale linearly with the number of physical qubits. 

%Also,  the noise threshold approaches zero as the level of parallelization approaches zero, implying that any fault-tolerant implementations must strive for sufficient parallelization in gate operations.
 \end{abstract}

 \maketitle

Fault-tolerant quantum computation happens in two periodically alternating phases: computation phase and error correction phase \cite{aharonov1997fault,gottesman2014fault,fawzi2020constant}. In most fault-tolerance models, qubit decoherence and gate errors during the computation phase are modeled together as a noise operator, but the decoherence of qubits during the error correction phase is generally ignored \cite{aharonov1997fault,gottesman2014fault,fawzi2020constant,fawzi2022lower,UthirakalyaniNC2023}. 
However, in practice, qubit decoherence in the error correction phase can be significant, due to the lack of parallelization in gate operations. 

In the error correction phase, (syndrome) measurements and corrective gate operations on qubits, like Pauli gates for stabilizer codes, take non-negligible time \cite{Plantenberg2007demonstration,Dicarlo2009demonstration,Isenhower2010demonstration}. In current technologies, these  operations  cannot be fully parallelized due to issues like cross-talks \cite{Zhao2022quantum, Arute2019quantum, Sarovar2020detecting} . Hence, these operations can be performed only in batches, resulting in enough idle qubits in each batch which interact with the bath and decohere further. Naturally, the size of the batch would depend on the technology and material used for implementing the circuit \cite{Zhao2022quantum, Arute2019quantum, Sarovar2020detecting}. On the other hand, the rate of decoherence of waiting qubits also depends on the material and  technology.

In this paper, we characterize the interplay and impact of the rate of qubit decoherence and the level of parallelization in gate operations on fault-tolerance. In particular, we characterize the required space overhead in terms of the rate (strength) of qubit decoherence and the amount of parallelization in gate operations during error correction. It is known that a lower bound on the space overhead of quantum circuits of a certain depth is implied by a lower bound on the space overhead of maintaining a quantum memory for a time equal to that circuit depth \cite{fawzi2022lower}. Hence, though our bounds directly  apply to quantum circuits, in this paper we mostly talk about fault-tolerant quantum memories.

% For i.i.d. qubit decoherence $\mathcal{N}$, from the previous works \cite{fawzi2022lower,UthirakalyaniNC2023} on space overhead of quantum memories (circuits), we know  that at least $\frac{1}{Q(\mathcal{N})}$ space overhead (ratio of the numbers of physical to logical qubits) is necessary for fault-tolerance, where $Q(\mathcal{N})$ is the quantum capacity of the noise or error operator $\mathcal{N}$. 

Our main contribution is the derivation of a new lower bound on the space overhead requirement for a quantum memory when there is qubit decoherence and the parallelization in gate operations is limited.  The bound on required overhead depends explicitly on the amount of parallelization in gate operations and the rate of decoherence, and is strictly more than the best known lower bounds when there is no restriction on gate parallelization \cite{harrow2003robustness,razborov2004upper,kempe2008upper, fawzi2022lower,UthirakalyaniNC2023}. For example, one of the best known lower bound on space overhead is $\frac{1}{Q(\mathcal{N})}$, where $Q(\mathcal{N})$ is the quantum capacity of the quantum channel $\mathcal{N}$ which represents noise or errors \cite{nielsen2002quantum,khatri2020principles}. The observed increase in the space overhead is the penalty paid for the lack of parallelization in gate operations. Further, for any non-zero decoherence rate, we prove that quantum states stored using any code is unrecoverable after a constant time (independent of number of physical qubits), if the amount of parallelization in gate operations is below a threshold. This threshold depends on the decoherence rate. To the best of our knowledge, this is the first work to analytically characterize the effect of (lack of) gate parallelization on fault-tolerance overhead, which is an important issue for any practical quantum circuit \cite{Zhao2022quantum, Arute2019quantum, Sarovar2020detecting}.

%We show that if the number of parallel gate operations on a circuit with $n$ physical qubits is $\alpha n$, for $0<\alpha<1$

\subsection*{{\bf Gate Parallelization and Space Overhead}}
Consider a quantum memory that stores a quantum state of $l$ logical qubits using $n$ physical qubits. The $l$ qubit logical state is encoded using some quantum code and the encoding is assumed to be perfect. This memory is periodically error corrected and at the end of $T$ error correction cycles the original $l$ qubit state is retrieved using a perfect decoder. The periodic error correction scheme, a.k.a. the fault-tolerance scheme, is said to keep the memory error corrected for $T$ cycles if the retrieved state is within a prescribed distance from the original $l$ qubit state. The space overhead of this scheme is defined as $\frac{n}{l}$. In this work we are interested in the asymptotic range $l \to \infty$.

In an error corrected memory there are two phases which occur periodically: static or rest phase and error correction phase. A static phase is of length $\bar{t}_{st}$ and an error correction phase is of length $\bar{t}_{ec}$.

During the static phase $\bar{t}_{st}$, qubits undergo decoherence due to interactions with the environment. In this work, we model decoherence as i.i.d. $q$ erasure noise or depolarizing noise, i.e., a qubit state is either erased or it becomes $\frac{I}{2}$ with probability $q$. In general, $q$ depends on the material, technology and $\bar{t}_{st}$. 

The above model is also true for a quantum circuit \cite{fawzi2022lower,UthirakalyaniNC2023}. In the context of a quantum circuit, $\bar{t}_{st}$ is the computation phase, and the noise in $\bar{t}_{st}$ captures both qubit decoherence and gate error. An error correction phase follows a computation phase. Keeping a memory corrected for $T$ cycles is equivalent to ensuring close to accurate computation of depth $T$.

 An error correction phase in a stabilizer code based error-corrected memory has two parts: syndrome extraction followed by Pauli correction \cite{aharonov1997fault,nielsen2002quantum}. However, we do not restrict to this widely used  stabilizer error correction framework. Rather, we consider a more general error correction framework by assuming more capability than the stabilizer framework. Since we obtain information theoretic converse bounds on  space overhead and noise threshold in this framework, these bounds are applicable to all known fault-tolerance schemes including the ones based on stabilizer codes. 
 
 We assume that the locations of the erasures and depolarizations ($\frac{I}{2}$) are known to the fault-tolerance scheme in each error correction phase, but the perfect decoder, which intends to retrieve the original state at the end of $T$ cycles, only knows the error statistics. 
 Thus, in this framework, the required gate operations in the error correction phase are minimal: correct errors by applying gates at the known locations.  Furthermore, we assume the following higher capability of the error correction phase: one gate operation on an erroneous qubit is enough for correcting it. Obviously, no existing or future fault-tolerance scheme can correct an erroneous using less than one gate operation.

Error correction phase has multiple batches, due to limitations on gate parallelization. A batch is of duration $t_g$ and at most $k= n\alpha$ qubits can undergo gate operations in a batch. Qubits that are not part of the batch remain idle and decohere independently (erasure or depolarization)  with probability $p$ that depends on the material, technology and $t_g$. The locations of the decohered idle qubits are known instantaneously. We allow even more capability to the fault-tolerance scheme by assuming that idle qubits decohering during a batch can be potentially corrected in that batch while respecting the parallelization constraint and qubits which are part of a batch do not decohere during the duration of that batch. The decision rule to include a subset of (erroneous) qubits in a batch does not prioritize any qubit.

Clearly, as mentioned before, the above error correction framework has more error correction capability than any currently known fault-tolerance scheme. 
Hence, the information theoretic converse bounds that we derive for  space overhead and noise threshold in this framework are directly applicable to all existing and arguably, future fault-tolerance schemes. There is scope for tightening our bounds for the  currently feasible schemes by assuming more modest error correction capabilities of the above framework. However, as presented later, our converse bounds derived for this more capable error correction framework are able to bring out the penalty in space overhead and noise threshold for the lack of gate parallelization. Tightening of the converse bounds for an error correction framework with more modest error correction capabilities is left for future work.

The following theorem gives a lower bound on space overhead in terms of the level of parallelization and the noise level in the correction phase. It also provides a necessary condition on the required level of parallelization in gate operations.

\begin{theorem}
\label{thm:overheadThreshold}
For $0<\theta<\frac{p-\alpha}{p}$, there exists a constant $C(\alpha,p,\theta)$, which is independent of $l$ and non-increasing in $\theta$, such that  for keeping the above memory  error corrected with a non-vanishing probability  against i.i.d. erasures    %$f_{p,\alpha}(\theta)=....$ for $0<\theta<\frac{p-\alpha}{p}$, $n \ge \frac{l p}{2 \alpha - p(1+\theta)}$
for a time duration $C(\alpha,p,\theta)$ 
any fault-tolerance scheme must have $n \ge \frac{l p}{2 \alpha - p + 2p \theta}$ when $\alpha \in (\frac{p}{2},p)$ and $q\to 0$. Moreover, no fault-tolerance scheme can keep the memory corrected for arbitrary constant time (independent of $l$) with a non-vanishing probability  %g(p,\alpha) .. . . $, 
if the level of parallelization,  $\alpha<\frac{p}{2}$. 

A similar lower bound is true for depolarizing noise for the same $C(\alpha,p,\theta)$: $n \ge \frac{l}{Q_{\mbox{dep}}\left(\frac{p-\alpha}{p}-\theta\right)}$ for $\frac{2p}{3}<\alpha<p$. Fault-tolerance for arbitrary constant time (independent of $l$) is impossible if $\alpha<\frac{2p}{3}$.

% For depolarizing noise and $q\to 0$, no quantum error correcting code can keep the memory corrected even for an arbitrary constant time (a constant independent of $l$) with a non-vanishing probability %g(p,\alpha) .. . . $, 
% if the level of parallelization,  $\alpha<\frac{2p}{3}$ and the lower bound on $n$ for $\frac{2p}{3}<\alpha<p$  is $\frac{l}{Q_{\mbox{dep}}\left(\frac{p-\alpha}{p}\right)}$. Here $Q_{\mbox{dep}}(\beta)$ is the quantum capacity of i.i.d. depolarizing channel when the probability of depolarization to $\frac{I}{2}$ is $\beta$.

The above lower bounds on space overhead and level of parallelization are non-decreasing in $q$.

 % For the above depolarising error model, the lower bound on space overhead for the keeping the memory corrected with a non-vanishing probability for a duration $f_{p,\alpha}(\theta)=....$ for $0<\theta<\frac{1}{3}-\frac{p-\alpha}{p}$ is $\frac{l}{Q_{\mbox{dep}}(\frac{p-\alpha}{p}+\theta)}$ for $\alpha>\frac{2 p}{3}$. Here $Q_{\mbox{dep}}(\beta)$ is the quantum capacity of i.i.d. depolarising channel when the probability of depolarization to $\frac{I}{2}$ is $\beta$. Further, no code can keep the memory corrected  for a time $g(p,\alpha). . . $, if $\alpha<\frac{2 p}{3}$.
\end{theorem}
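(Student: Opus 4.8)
The plan is to reduce the entire statement to a single scalar---the effective erasure (resp.\ depolarization) fraction $\epsilon_{\mathrm{eff}}$ that the perfect decoder must contend with after the memory has run for time $C(\alpha,p,\theta)$---and then invoke the quantum-capacity converse $l \le n\,Q(\mathcal N)$. The core claim I would establish is that, even for the most capable scheme permitted by the framework, the fraction of physical qubits that are erased (or depolarized to $I/2$) and left uncorrected at the decoding instant is at least $\frac{p-\alpha}{p}-\theta$ with non-vanishing probability. Granting this, the residual channel from the encoded state to the decoder input is an i.i.d.\ erasure channel with parameter $\epsilon_{\mathrm{eff}}\ge \frac{p-\alpha}{p}-\theta$, so the converse forces $n\ge l/Q(\epsilon_{\mathrm{eff}})$. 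Substituting $Q_{\mathrm{erasure}}(\epsilon)=1-2\epsilon$ yields $n\ge \frac{lp}{2\alpha-p+2p\theta}$ after simplification, while substituting $Q_{\mathrm{dep}}$ yields $n\ge l/Q_{\mathrm{dep}}\!\left(\frac{p-\alpha}{p}-\theta\right)$, exactly the two stated bounds.

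The heart of the argument is a per-batch error accounting. In each batch at most $k=n\alpha$ qubits receive a gate, so at least $n(1-\alpha)$ qubits idle and each acquires a fresh erasure with probability $p$; this is an irreducible source of new errors that no scheme can suppress. Writing $f_t$ for the erased fraction after batch $t$ and using that (i) one gate repairs an erasure and (ii) the framework does not prioritize any qubit, I would derive, in the regime where errors exceed the repair budget, the recursion $f_{t+1}=(1-p)f_t+(p-\alpha)$, whose unique stable fixed point is $f^\star=\frac{p-\alpha}{p}$. In the complementary (unsaturated) regime the same accounting pins the residual fraction at $p(1-\alpha)$, and the elementary inequality $p(1-\alpha)\ge\frac{p-\alpha}{p}\Leftrightarrow \alpha\ge\frac{p}{1+p}$ shows that in either regime the long-run erased fraction is at least $\frac{p-\alpha}{p}$. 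Taking $C(\alpha,p,\theta)$ to be the time (number of batches times $t_g$) needed for $f_t$ to come within $\theta$ of $f^\star$ produces a constant independent of $l$ and non-increasing in $\theta$---a smaller slack demands more batches---matching the stated dependence.

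Since $f_t$ is a function of $\Theta(n)$ independent decoherence indicators, as $l,n\to\infty$ a concentration argument (e.g.,\ Azuma's inequality for the bounded-difference martingale tracking the error count) pins it to its deterministic recursion, so $\epsilon_{\mathrm{eff}}\ge \frac{p-\alpha}{p}-\theta$ holds with probability approaching one; combined with the capacity converse this gives both overhead bounds. The impossibility thresholds then fall out by locating where the capacity vanishes: $Q_{\mathrm{erasure}}=0$ at $\epsilon=\tfrac12$, i.e.,\ $\frac{p-\alpha}{p}=\tfrac12\Leftrightarrow\alpha=\tfrac p2$, so for $\alpha<\tfrac p2$ one has $\epsilon_{\mathrm{eff}}>\tfrac12$ and no finite $n$ preserves even one logical qubit past time $C$; identically $Q_{\mathrm{dep}}$ vanishes at $\epsilon=\tfrac13$, i.e.,\ $\alpha=\tfrac{2p}{3}$. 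The depolarizing statements use the same $\epsilon_{\mathrm{eff}}$ and hence the same $C$, with $Q_{\mathrm{dep}}$ replacing $1-2\epsilon$. Finally, reinstating $q>0$ only injects extra erasures in the static phase, which can only raise $\epsilon_{\mathrm{eff}}$; as $Q$ is non-increasing in its error parameter, both the overhead bounds and the thresholds are non-decreasing in $q$.

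The main obstacle, I expect, is the adversarial optimization over schemes: the bound must hold against every allowed choice of which qubits to batch and when to decode, so I must argue that prioritized repair is optimal and that no scheme can push the residual erased fraction below $\frac{p-\alpha}{p}-\theta$, rather than merely analyzing one natural policy. Making this rigorous requires converting the exchangeability forced by the no-prioritization rule into the statement that the residual erasure pattern is asymptotically i.i.d., so that the single-letter capacity converse genuinely applies to the induced $n$-qubit channel. Controlling the transient before $f^\star$ is reached, and certifying that $C(\alpha,p,\theta)$ is truly $l$-independent, is the remaining delicate point.
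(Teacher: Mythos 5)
Your overall route is the paper's own: the saturated mean-field recursion $f_{t+1}=(1-p)f_t+(p-\alpha)$ with fixed point $\frac{p-\alpha}{p}$, a constant transient time $C(\alpha,p,\theta)$ pinned by concentration over an $l$-independent horizon, a capacity converse giving $\frac{1}{1-2\beta}$ and $\frac{1}{Q_{\mbox{dep}}(\beta)}$ at $\beta=\frac{p-\alpha}{p}-\theta$, impossibility where the capacity vanishes, and monotonicity in $q$ by noise domination. But two steps, as written, have genuine problems. The more serious one is the converse step: the claim that ``the residual channel from the encoded state to the decoder input is an i.i.d.\ erasure channel with parameter $\epsilon_{\mathrm{eff}}$'' is false as stated. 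What the dynamics deliver is a pattern of (at least) $n\beta$ errors whose count is nearly deterministic and whose locations are uniform given the count --- an exchangeable, fixed-weight pattern, not a product channel --- so $l\le n\,Q(\epsilon_{\mathrm{eff}})$ cannot be invoked directly. You flag this, but your proposed repair (show the residual pattern is ``asymptotically i.i.d.'' so the single-letter converse applies) runs the reduction in the hard direction. The paper closes the gap with an elementary reduction the other way: if a code of rate exceeding $1-2\beta$ corrected every pattern of $n\beta$ uniformly placed erasures, then running it over an i.i.d.\ $(\beta-\gamma)$ erasure channel would succeed with probability $1-\exp(-2n\gamma^2)$, since by Chernoff the weight is at most $n\beta$ with that probability and, conditioned on the weight, the locations are uniform; letting $\gamma\to 0$ contradicts the erasure capacity $1-2(\beta-\gamma)$. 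That argument needs only a binomial tail bound and conditional uniformity, and the same template gives the depolarizing bound and both thresholds.

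The second problem is your ``unsaturated regime'' bookkeeping. If $X_t+Y_t\le n\alpha$, the framework can repair every error in that batch, so the residual there is $0$, not $p(1-\alpha)$; worse, your patching inequality $p(1-\alpha)\ge\frac{p-\alpha}{p}$ holds only for $\alpha\ge\frac{p}{1+p}$, and since $\frac{p}{1+p}>\frac{p}{2}$ for $p<1$, this case analysis --- were it actually load-bearing --- would leave the theorem unproved on the sub-range $\left(\frac{p}{2},\frac{p}{1+p}\right)$ of the stated range $\left(\frac{p}{2},p\right)$. Fortunately it is vacuous: for $\alpha<p$ and $X_t=x\le n\beta$ one has $x+(n-x)p\ge np>n\alpha$ up to concentration, so the repair budget is saturated at every step on the way to the fixed point, which is exactly the regime the paper's per-step bound (its Lemma 4) conditions on; the unsaturated branch should simply be deleted. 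Two smaller remarks: no optimality-of-prioritized-repair argument is needed, because $X_{t+1}\ge X_t+Y_t-n\alpha$ holds for every admissible policy and the uniformity of residual locations is the model's no-prioritization assumption, not something to be derived; and since $T$ is a constant independent of $n$, a per-step Chernoff bound propagated by a union bound (or, as the paper does, through the stochastic monotonicity of the chain) suffices in place of Azuma's inequality.
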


% \begin{theorem}
% \label{thm:overheadThreshold}
% For keeping the above memory with $\alpha<p$ error corrected with a non-vanishing probability for a time duration $n (\bar{t}+\bar{t}_{corr})$ against the above erasure error model,  any quantum error correcting code must have $n \ge \frac{l~p}{2 \alpha - p}$ for $\alpha>\frac{p}{2}$. Moreover, no quantum error correcting code can keep the memory corrected for a time $n(\bar{t}+\bar{t}_{corr})$ if, the level of parallelization,  $\alpha<\frac{p}{2}$. 

% For the above depolarising error model, the lower bound on space overhead is $n \ge \frac{l}{Q_{\mbox{dep}}(\frac{p-\alpha}{p})}$ for $\alpha>\frac{2 p}{3}$, where $Q_{\mbox{dep}}(\beta)$ is the quantum capacity of i.i.d. depolarising channel when the probability of depolarization to $\frac{I}{2}$ is $\beta$. Further, no code can keep the memory corrected  for a time $n(\bar{t}+\bar{t}_{corr})$ if $\alpha<\frac{2 p}{3}$.
% \end{theorem}

 Later in this section, we discuss an outline of the proof of the above theorem followed by a mean-field intuition behind the result. First, we discuss the intuitive meaning of this theorem. 
 
 The theorem states that, for erasure (depolarizing) noise, given $p$ and $\alpha>\frac{p}{2}$ ($\frac{2p}{3}$), to ensure that the state in the memory remains accurate for a constant time $C(\alpha,p,\theta)$ the lower bound on  space overhead decreases with $\theta$. Thus, $\theta$ is a tunable parameter, which captures required increase in space overhead when the memory needs to be maintained for a (constant) longer time. By choosing $\theta$ to be sufficiently small, we observe that when $n < \frac{l p}{2\alpha-p}$ $\left(n < \frac{l}{Q_{\mbox{dep}}\left(\frac{p-\alpha}{p}\right)} \right)$, fault-tolerant memory for a duration more than a constant $C(p,\alpha,\theta)$ is impossible for i.i.d. $p$ erasure (depolarizing) in the error correction phase.  

 The theorem also states that no fault-tolerance technique is useful in maintaining the state in the memory against i.i.d. erasure (depolarizing) noise for an arbitrary constant time when $\alpha<\frac{p}{2}$ ($\frac{2p}{3}$). This implies that $\frac{p}{2}$ ($\frac{2p}{3}$) to be lower bound on the level of parallelization for ensuring fault-tolerance over an arbitrary constant time.

The above bounds are stated in the theorem for $q \to 0$. The purpose is to bring out the effect of errors during the error correction phase, even when errors in the static phase of a memory (computation phase of a circuit) are negligible. The above result shows that errors in the error correction phase can hurt fault-tolerance severely when there is lack of parallelization in gate operations. However, the lower bounds in Theorem \ref{thm:overheadThreshold} on space overhead and parallelization level are true for any $q>0$. This is because the space overhead bounds are monotonic in $q$, as stated at the end of the theorem.

 As mentioned above, the lower bounds in Theorem \ref{thm:overheadThreshold} are applicable to fault-tolerant quantum circuits as well. In that case, time is equivalent to circuit depth. In quantum computation, almost all noiseless quantum circuits on $l$ logical qubits have depths that scale with $l$. In fact, in practice most scalings are faster than $\log l$. Since the lower bounds in the theorem  apply directly to constant depth circuits (equivalent to constant time memory), these lower bounds are also true for all useful circuits when $l$ is large enough.

 Some practical implications of Theorem \ref{thm:overheadThreshold} is discussed in the next section. Now we discuss an outline of the proof of the above theorem and its mean-field intuition.

An important step in the proof is to show that within a constant time (depending only on $p$ and $\alpha$, and not $l$), $n\frac{p-\alpha}{p}$ number of uncorrected errors accumulate with probability close to $1$ when $\alpha<p$ and $q \approx 0$. For showing the above bound we use the fact that when errors are erasure or depolarlization, the state of an uncorrected qubit does not change if it decoheres further. It is also observed  that the accumulated uncorrected errors are uniformly distributed.
This is because the errors are i.i.d. and the error correction batches do not follow any priority among erroneous qubits for error correction. 

%Also, since the errors are erasure and depolarizing, the state of an uncorrected qubit does not change if it decoheres further. %This property is harnessed in bounding the evolution of uncorrected errors and then in obtaining the lower bound on space overhead.

The lower bound on the space overhead for $\alpha<p$ is obtained by obtaining an upper bound on the capacity of a quantum channel with uniform $n\frac{p-\alpha}{p}$ number of errors. For this step, we show that the capacity of channels with $n\gamma$ uniform errors is upper bounded the capacity of a channel with i.i.d. errors with probability $\gamma$. Finally, we use the capacity of i.i.d. erasure and depolarizing channels for obtaining the lower bound on space overhead.

The lower bounds on the parallelization $\alpha$ is obtained by using the same approach as above.  Here, we use the fact that capacity of i.i.d $\gamma$ erasure (depolarizing) channels are zero for $\gamma>\frac{1}{2}$ $\left(\gamma>\frac{1}{3}\right)$. Hence, the memory under i.i.d $p$ erasure (depolarizing) noise is information theoretically impossible to recover after a constant time if $\frac{p-\alpha}{p}>\frac{1}{2}$ $\left(\frac{p-\alpha}{p}>\frac{1}{3}\right)$.

As mentioned before, a crucial technical part of the proof of Theorem~\ref{thm:overheadThreshold} is the derivation of the high probability bound on the time by which $n\frac{p-\alpha}{p}$ uncorrected errors accumulate. This is obtained by developing a high probability bound on the hitting time of a well chosen non-birth-death Markov chain. The Markov chain is chosen to lower bound  the evolution of errors in a stochastic dominance sense. Obtaining a closed-form lower bound on even the expected hitting time of a non-birth-death Markov chain is a non-trivial challenge \cite{MarkovChainHittingTime}. We obtain the high probability bound on the hitting time by recursively moving a stochastic dominance bound and exploiting the specific structure of the concerned Markov chain.

We refer to Appendix for a rigorous derivation of the high probability bound on the time by which $n\frac{p-\alpha}{p}$ uncorrected errors accumulate. This derivation is presented inside the complete proof of  Theorem~\ref{thm:overheadThreshold}. Here, we discuss an intuitive mean-field derivation of the bound on the time by which $n\frac{p-\alpha}{p}$ uncorrected errors accumulate.

% Since $q$ is negligible, the source of errors are in the error correction batches. Since $\bar{t}$ and $t_g$ are constants independent of $l$, if the memory degrades after a constant number of error correction batches, it can be said that the memory degraded after a finite time. Hence, we ignore the phases of length $\bar{t}$ and consider the consecutive error correction batches to be consecutive and refer to them as epochs. 

Suppose the mean number of uncorrected errors after $t$th error correction phase is ${x}_t$. Then, the mean number of uncorrected errors after $(t+1)$th phase would increase by 
$\max((n-x_t) p - n \alpha,0)$. Here, the first term corresponds to the average number of qubits that were fine after $t$, but got erased after $t+1$, and $n \alpha$ is the maximum number of erroneous qubits that could be corrected at $(t+1)$th phase due to the constraint on parallel gate operations. Note that if the number of erroneous qubits at phase $0$ is $0$, the mean number of uncorrected errors is an increasing sequence and hence, it must converge \cite{Rosenlicht1986analysis}. At the steady state, the increase in the sequence should saturate, implying $(n-x_{\infty}) p - n \alpha = 0$. Hence, the number of uncorrected qubits at steady state would be $x_{\infty}=n \frac{p-\alpha}{p}$.

Note that the mean number of uncorrected errors is is growing by at least $n\epsilon p$ in each phase till it reaches $n\left(\frac{p-\alpha}{p}-\epsilon\right)$, it . Thus, the number of phases to reach $n\left(\frac{p-\alpha}{p}-\epsilon\right)$ is upper bounded by $\frac{p-\alpha}{\epsilon p^2}$, a constant that depends on $p$ and $\alpha$, but not on $n$ or $l$. 

Interestingly, the above mean field derivation matches the rigorous bounds in a qualitative sense. For a continuous time Markov chain on $N$ states can be provably approximated by a mean field differential equation when $N \to \infty$ under certain conditions on the Markov chain. A class of theorems known as the Kurtz's theorems \cite{Kurtz1978strong} formalize such results. However, we are not aware of such results for provable approximations of discrete time Markov chains using discrete time difference equations. Here, we observe a qualitative closeness of the above mean-field arguments with the rigorous bound, whose proof is in Appendix. Intuitively, due to measure concentration \cite{BoucheronLM2013} of i.i.d. errors, when the number of physical qubits is large, the change in the number of uncorrected errors between two consecutive error correction phases is close to the mean value with a probability that is exponentially close to $1$. Hence, by the application of the union bound, over a constant  (independent of number of physical qubits) number of phases, the evolution of total uncorrected errors is close to the evolution of the above mean field evolution.

\subsection*{{\bf Implications}}
% When there is no limitation on gate parallelization, it is known that for a quantum circuit with independent erasures (probability $p$), $\frac{1}{Q(\mathcal{N})}=\frac{1}{1-2p}$ space overhead is necessary when $p<\frac{1}{2}$ \cite{fawzi2022lower}. Theorem~ \ref{thm:overheadThreshold} implies that when gate parallelization is $\alpha$, a space overhead greater than $\frac{1}{1-2p}$ is necessary if $\alpha<p$. This increase in overhead compared to existing lower bounds \cite{fawzi2022lower,UthirakalyaniNC2023} indicates the price to be paid for the limitation in gate parallelization. 

 There are some interesting implications of Theorem~\ref{thm:overheadThreshold} that are of practical relevance. As discussed after Theorem~\ref{thm:overheadThreshold}, when $q\to 0$, the lower bounds on the space overheads, $\frac{lp}{2\alpha-p}$ and $\frac{l}{Q_{\mbox{dep}}\left(\frac{p-\alpha}{p}-\theta\right)}$, increase as $\alpha$ decreases over the range $(\frac{p}{2},p)$ and $(\frac{2p}{3},p)$, for erasure  and deploarizing noise, respectively. Clearly, for $q\to 0$, the corresponding lower bounds without any constraint on the level of parallelization \cite{fawzi2022lower,UthirakalyaniNC2023} are $\frac{l}{1-2p}$ and $\frac{l}{Q_{\mbox{dep}}(p)}$. The lower bounds in Theorem~\ref{thm:overheadThreshold} are strictly larger than the bounds without parallelization constraint when $\alpha<p(1-p)$. Thus, a direct implication of Theorem~\ref{thm:overheadThreshold} is that  lack of parallelization provably hurts in terms of the space overhead if parallelization is below a certain level ($p(1-p)$).

Theorem ~\ref{thm:overheadThreshold} also implies that fault-tolerance using quantum codes is impossible if the level of parallelization, $\alpha$, is less than 
$\frac{p}{2}$ and $\frac{2p}{3}$, for erasure and depolarizing noise respectively. 
These parallelization thresholds, in turn, imply upper bounds on noise threshold that are strictly lower than the known bounds. The best known upper bounds on thresholds for $p$ (when $q\to 0$) for the case  with fully parallel gate operations \cite{fawzi2022lower, UthirakalyaniNC2023}  are $\frac{1}{2}$ and $\frac{1}{3}$ for erasure and depolarizing noise, respectively. On the other hand, when level of parallelization is $\alpha$, the respective noise thresholds are $2\alpha$  and $\frac{3\alpha}{2}$. This means that the upper bound on the respective noise thresholds approach zero when the level of gate parallelization  approaches zero. This, in turn, implies that fault-tolerance is impossible if the number of parallel operations in a batch, $k$,  does not scale linearly with $n$.

In the case with fully parallelized gate operations, for $q>0$, the overhead bound is  $n \ge \frac{l}{1-2(1-(1-p)(1-q))}$ when errors are erasures \cite{fawzi2022lower,UthirakalyaniNC2023}. This is because, in the fully parallelized case, one needs only one correction batch to correct all errors including the ones that occur during that correction batch. Hence, the effective errors are distributed as i.i.d $1-(1-p)(1-q)$. After some algebraic calculations it can be seen that the lower bound on space overhead in the above theorem (for $q\to 0$)  is strictly greater than $\frac{1}{1-2(1-(1-p)(1-q))}$ if $\alpha<p(1-p)(1-q)$. It turns out that the same condition is true for depolarizing noise. In practice $p$ and $q$ are mostly less than $0.1$. This implies that when $\alpha<0.8~p$, the space overhead needed for $q \to 0$ is significantly more than the overhead needed for $q>0$ in the presence of full gate parallelization. This clearly shows that lack of gate parallelization in the error correction phase  cost more in terms of overhead than higher error rate ($q>0$) in the static (computation) phase.

Theorem~ \ref{thm:overheadThreshold} can also offer some guidelines for  practical implementations of quantum circuits. The probability of decoherence $p$ during an error correction batch of length $t_g$ depends on the duration as well as the nature of the material and the environment which dictate the rate of decoherence with time. A well accepted model for Markov decoherence is $p=1-\exp(-\kappa t_g)$ \cite{nielsen2002quantum,khatri2020principles}, where $\kappa$ is the rate of decoherence. Under this decoherence model, the theorem directly gives a lower bound on the minimum level of gate parallelization, $\alpha$, that is necessary for a given decoherence rate $\kappa$: $\frac{1-\exp(-\kappa{t}_g)}{2}$ and $\frac{1-\exp(-\kappa{t}_g)}{1.5}$ for erasure and depolarizing noise, respectively. Thus, if we plot $\frac{1-\exp(-\kappa{t}_g)}{2}$ (or $\frac{1-\exp(-\kappa{t}_g)}{1.5}$) as a surface in $3$D ($\kappa$, $t_g$ and $\alpha$), any technology with ($\kappa, t_g, \alpha$) lying below that surface can never realize a fault-tolerant circuit or memory. In other words, any quantum hardware technology must strive to be above the surface.

The required space overhead for erasure noise is $\frac{1-\exp(-\kappa t_g)}{2\alpha-1+\exp(-\kappa t_g)}$ when $\alpha>\frac{1-\exp(-\kappa{t}_g)}{2}$. When $\kappa t_g$ is sufficiently small, which is expected in a good hardware, the overhead becomes $\left(\frac{2\alpha}{\kappa t_g} - 1\right)$. Hence, the ratio,  $\frac{\alpha}{\kappa t_g}$, must be kept higher than $\frac{1}{2}$ for having any kind of fault-tolerance. A technology with higher value of this ratio can employ error correction with lower overhead.

\subsection*{{\bf Conclusion}}
Modern quantum technologies like superconducting qubits suffer from cross-talk which severely restricts parallelization in gate operations. In this paper,  we started an analytical investigation into the basic aspects of the limitations imposed by the lack of parallelization in gate operations. Our  results  show that the limitation in gate parallelization  leads to higher space overhead. Limited gate parallelization can also result in vanishing noise threshold if the number of parallel gate operations does not scale linearly with the number of physical qubits. In fact, decreased gate parallelization can lead to worse overhead than increased noise in the static (computation) phase of a memory (circuit). Our quantitative bounds offer some guidelines on the conditions to be followed in the designing of practical systems. We would like to emphasize that this work is simply a first step towards quantifying the penalty to be paid due to lack of gate parallelization. A thorough investigation into this issue would be crucial for practical realization of fault-tolerant quantum systems.

In future,  it would be important to understand the interplay between gate parallelization and correlated noise. Since in practical quantum systems, correlated noise is an important issue. Also, this work implicitly models cross-talk as a sharp  binary function of parallelization: there is no cross-talk till a certain level of parallelization, which we model as $\alpha$ and very high cross-talk beyond that. It would be interesting to investigate the effect of a more smooth dependence of the intensity of cross-talk on the level of parallelization.

\bibliographystyle{IEEEtran}
\bibliography{bibfile}

%\newpage
% \onecolumngrid

%\setcounter{page}{1}
%\section*{Supplementary Material}
\appendix

\section{Proof of Theorem \ref{thm:overheadThreshold}}
\label{sec:proof}
Since $q\to 0$, in this more capable error correction framework, the sources of errors are the error correction batches. Since $\bar{t}_{st}$ and $t_{ec}$ are constants independent of $l$, if the memory degrades after a constant number of error correction phases, it can be said that the memory degraded after a finite time. Since each error correction phase contains at least one batch, it is enough to show that the memory degrades after a constant number of batches. Also, since there is almost no error in $\bar{t}_{st}$, we can ignore the static phases and see the system as a sequence of error correction batches, which we call error correction epochs or epochs.

After an error correction epoch, each correct qubit undergoes decoherence with probability $p$. On the other hand, the qubits that were erased or depolarized after the previous error correction stay that way. During the error correction phase,  the correct locations of the errors are known. Since our goal is a lower bound on space overhead, as discussed before, allowing  more capability is fine. Note that in the correction phase, at most $k =  n \alpha$ qubits can undergo gate operations, and hence, at most $k=n \alpha$ qubits can be corrected. This leads to the following Markov dynamics for the number of errors.

\subsection*{Markov Chain of Errors}

The evolution of the number of errors is modeled as a Markov process as follows. Let $X_t$ denote the number of corrupted (decohered) qubits at the start of time epoch $t$. Initially, there are no corrupted qubits ($X_0 = 0$). At the start of time epoch $t$, due to interactions with the environment, each qubit that is not already decohered is acted on by a noise operator (e.g., erasure or depolarizing noise), making it decohere with probability $p$. If the noise acts on an already decoherence qubit, it stays the same, due to the nature of erasure and depolarizing noise operators.

% Also, in our error correction framework with higher capability, qubits which are part of an error correction batch do not decohere in that batch.
Since noise acts independently on each qubit and , the number of qubits that decohere at time epoch $t$ is given by a $\bin(n - X_t, p)$ random variable, which we denote as $Y_t$. Then, in the correction phase, we can correct $f_t$ of the qubits, where $f_t = f(X_t + Y_t) = \min\{ X_t + Y_t, n\alpha\} $. 
% $f_t = f(X_t + Y_t)$, and $f$ is given by
% \begin{equation}
%     \label{eq:def_f}
%     f(x) = \begin{cases}
%         x, \quad 0 \leq x \leq n\alpha \\
%         n\alpha, \quad n\alpha < x \leq n - n\alpha.
%     \end{cases}
% \end{equation}
The resulting quantum state at the start of time epoch $t+1$ has $X_{t+1} = X_t + Y_t - f_t$ decohered qubits, and the sequence $\{X_t\}_{t \in \mathbb{N}}$ forms a Markov chain.

\subsection*{Preliminaries}
In this section, we will state some results that will be used throughout the paper. First, we look at the familiar concentration bounds for binomial random variables.
\begin{lemma}{(Concentration bounds)}
\label{lemma1_conc}
    Let $X \sim \bin(n, p)$. Then,
    \begin{align*}
        \p[X \geq n(p + \epsilon)] &\leq \exp{(-2n\epsilon^2)}, \text{ and} \\
        \p[X \leq n(p - \epsilon)] &\leq \exp{(-2n\epsilon^2)}.
    \end{align*}
\end{lemma}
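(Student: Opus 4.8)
The plan is to recognize this as the standard Hoeffding bound for a sum of independent, bounded random variables and to obtain it via the Chernoff exponential-moment method. First I would write $X = \sum_{i=1}^n Z_i$, where the $Z_i$ are i.i.d.\ $\ber(p)$ indicators, so that $\expec{X} = np$ and each $Z_i$ is supported on $[0,1]$. For the upper tail, I would fix $\lambda > 0$ and apply Markov's inequality to $e^{\lambda X}$:
\[
\p[X \geq n(p+\epsilon)] = \p[e^{\lambda X} \geq e^{\lambda n(p+\epsilon)}] \leq e^{-\lambda n(p+\epsilon)}\,\expec{e^{\lambda X}}.
\]
By independence the moment generating function factorizes as $\expec{e^{\lambda X}} = \prod_{i=1}^n \expec{e^{\lambda Z_i}}$, reducing everything to controlling the MGF of a single bounded variable.

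The one genuinely computational step is Hoeffding's lemma: for any random variable $Z \in [0,1]$ with mean $\mu$, one has $\expec{e^{\lambda(Z-\mu)}} \leq e^{\lambda^2/8}$. I would prove this by setting $\psi(\lambda) = \log \expec{e^{\lambda(Z-\mu)}}$, checking $\psi(0) = \psi'(0) = 0$, and bounding $\psi''(\lambda) \leq 1/4$ uniformly, since $\psi''(\lambda)$ is the variance of $Z$ under an exponentially tilted measure still supported on $[0,1]$, and any $[0,1]$-valued variable has variance at most $1/4$. A second-order Taylor expansion then gives $\psi(\lambda) \leq \lambda^2/8$, and multiplying across the $n$ factors yields $\expec{e^{\lambda(X - np)}} \leq e^{n\lambda^2/8}$.

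Substituting back, the upper-tail estimate becomes $\p[X \geq n(p+\epsilon)] \leq \exp\!\left(n\lambda^2/8 - \lambda n \epsilon\right)$, which I would minimize over $\lambda > 0$; the optimal choice $\lambda = 4\epsilon$ produces exponent $-2n\epsilon^2$, giving the claimed bound. The lower-tail inequality follows from the identical argument applied to $-X$ (equivalently, to the complementary indicators $1 - Z_i \sim \ber(1-p)$), since Hoeffding's lemma is insensitive to sign. I do not anticipate a real obstacle here: the result is precisely the textbook Hoeffding inequality specialized to $[0,1]$-valued summands, and the constant $2$ in the exponent is exactly what the $\lambda^2/8$ MGF bound and the optimal tilt produce. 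The only care needed is the self-contained derivation of Hoeffding's lemma via the convexity/variance argument above, which could alternatively be cited rather than reproved.
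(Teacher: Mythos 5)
Your proposal is correct and follows essentially the same route as the paper: the paper's proof also decomposes $X$ into i.i.d.\ Bernoulli summands, invokes Hoeffding's lemma to establish that each is $1/2$-subgaussian (equivalently, your MGF bound $e^{\lambda^2/8}$), and concludes via the Chernoff argument. You merely write out in full the subgaussian/Chernoff computation that the paper cites, so there is no substantive difference.
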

\begin{proof}
    By definition, $X$ can be written as a sum of $n$ independent $\ber(p)$ random variables, each of which is $1/2-$subgaussian (by Hoeffding's lemma). Therefore, $X$ is $\sqrt{n}/2-$subgaussian, and we get the concentration bounds by applying Chernoff's inequality and using the subgaussian property \cite{BoucheronLM2013}.
\end{proof}
Next, we mention an important property of the Markov chain $\{X_t\}$, related to the monotonicity of $\p[X_{t+1} \geq k \mid X_t = x]$ with respect to $x$; if $x$ is more, then intuitively this probability should be higher.
\begin{lemma}
\label{lemma2_mon}
    Let $h_k(x) = \p[X_{t+1} \geq k \mid X_t = x]$. Then, for all $k \in [0,n]$, $h_k(\cdot)$ is an increasing function.
\end{lemma}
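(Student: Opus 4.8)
The plan is to prove the stronger statement that the one-step transition is \emph{stochastically monotone}: for any two starting states $x_1 \le x_2$ in $\{0,\dots,n\}$ I can couple the two transition laws on a common probability space so that the resulting $X_{t+1}$ started from $x_2$ always dominates the one started from $x_1$. Monotonicity of $h_k$ for every $k$ is then immediate, because $h_k(x) = \p[X_{t+1} \ge k \mid X_t = x]$ is precisely a tail probability of $X_{t+1}$, and pointwise domination of random variables preserves all such tail probabilities. The case $k=0$ is trivial since $h_0 \equiv 1$, so I focus on $k \ge 1$.

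First I would rewrite the dynamics in a form that isolates the monotone dependence. From $f_t = \min\{X_t + Y_t, n\alpha\}$ one gets $X_{t+1} = \max\{(X_t + Y_t) - n\alpha,\, 0\}$, which is a non-decreasing function of the post-noise count $S := X_t + Y_t$. Hence it suffices to show that $S$ is stochastically monotone in the starting state and then compose with the fixed non-decreasing map $s \mapsto \max\{s - n\alpha, 0\}$.

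For the coupling, fix $x_1 \le x_2$ and label the $n$ qubits so that, in the chain started from $x_j$, the already-corrupted qubits are exactly $\{1,\dots,x_j\}$; thus the corrupted set for $x_1$ is contained in that for $x_2$. I draw a single family of i.i.d.\ $\ber(p)$ variables $B_1,\dots,B_n$ and, in the chain started from $x_j$, declare a healthy qubit $i > x_j$ to newly decohere iff $B_i = 1$. Marginally the number of new errors is then $\bin(n - x_j, p)$ as required, so this is a legitimate coupling of the two transition laws. The post-noise counts become $S_j = x_j + \sum_{i > x_j} B_i$, and their difference telescopes to
\[
S_2 - S_1 \;=\; (x_2 - x_1) - \sum_{i = x_1+1}^{x_2} B_i \;\ge\; 0,
\]
since the sum of the $x_2 - x_1$ Bernoulli variables is at most $x_2 - x_1$. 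Therefore $S_2 \ge S_1$ pointwise, and applying the non-decreasing map yields $X_{t+1}^{(x_2)} \ge X_{t+1}^{(x_1)}$ almost surely, giving $h_k(x_2) \ge h_k(x_1)$ for every $k$.

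I do not expect a genuine obstacle, but the point to be careful about is the competing effect that a direct approach would have to confront: increasing $x$ simultaneously lowers the effective threshold $k + n\alpha - x$ (helping) and shrinks the number $n - x$ of qubits that can still decohere (hurting), so a head-on comparison of binomial tails with differing parameters is awkward. The coupling sidesteps this entirely by exploiting that each extra already-corrupted qubit contributes a deterministic $+1$ to $S$ while removing only a $\ber(p) \le 1$ potential contribution, making the net change in $S$ nonnegative sample-path-wise rather than merely in distribution.
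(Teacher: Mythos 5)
Your proof is correct. The reduction $X_{t+1}=\max\{X_t+Y_t-n\alpha,0\}$, the verification that the marginals of your coupling are the right binomials, and the pointwise inequality $S_2-S_1=(x_2-x_1)-\sum_{i=x_1+1}^{x_2}B_i\ge 0$ are all sound, and monotonicity of every tail probability follows from pathwise domination. Your route is genuinely different from the paper's: the paper also first observes that $z\mapsto z-f(z)$ is non-decreasing, but then reduces $h_k(x)$ to a single binomial tail, $\p[Z'\le n-z_k]$ with $Z'\sim\bin(n-x,1-p)$, and concludes from the fact that a binomial with fewer trials (and a threshold not depending on $x$) has a larger lower-tail probability; you instead build an explicit qubit-level coupling with shared Bernoulli randomness and get almost-sure domination of $X_{t+1}$ itself. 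The two arguments rest on the same underlying observation --- an already-corrupted qubit contributes a deterministic $+1$ where a healthy one contributes only a $\ber(p)$ --- but yours proves the strictly stronger statement of stochastic monotonicity of the transition kernel. That extra strength is not wasted: the paper's Lemma~\ref{lemma3_ext_mon} (the $m$-step version) is proved there by an induction with an Abel-summation rearrangement, whereas your coupling composes over steps and yields the $m$-step monotonicity essentially for free. The one point worth stating explicitly if you write this up is that the labeling of corrupted qubits as $\{1,\dots,x_j\}$ is only a device for constructing a joint law with the correct marginals, which is all a coupling requires.
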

\begin{proof}
    We have $h_k(x) = \p[x + Y_t - f(x + Y_t) \geq k \mid X_t = x] $.
    Consider the function $g(z) = z - f(z)$. From the definition of $f$, we observe that $g$ is non-decreasing, and hence for each $k \in [0,n]$ there exists a unique $z_k$ (given by $\inf_z \{g(z) \geq k \}$) such that $g(z) \geq k \Longleftrightarrow z \geq z_k $. Thus, we can write $h_k(x) = \p[x + Y_t \geq z_k \mid X_t = x] = \p[Y_t \geq z_k - x \mid X_t = x] = \p[Z \geq z_k - x]$, where $Z \sim \bin(n-x, p)$. \\
    We can rewrite this probability as $h_k(x) = \p[n-x-Z \leq n-x - (z_k-x)] = \p[Z' \leq n - z_k]$, where $Z' = n-x-Z \sim \bin(n-x, 1-p)$. We may think of $Z'$ as a sum of $n-x$ independent $\ber(1-p)$ random variables. It is now evident that if $x$ increases, then $n-x$ decreases, and thus $h_k(x) = \p[Z' \leq n - z_k]$ increases (not necessarily strictly, as $z_k - x$ could be negative), because $n-z_k$ is a constant independent of $x$. 
\end{proof}

As a corollary, we have the following extension of the previous lemma.
\begin{lemma}
\label{lemma3_ext_mon}
    Let $h_k^{(m)}(x) = \p[X_{t+m} \geq k \mid X_t = x]$, where $m \geq 1$. Then, for all $k \in [0,n]$, $h_k^{(m)}(\cdot)$ is an increasing function. Further, this implies that $\p[X_t \geq k]$ is increasing in $t$.
\end{lemma}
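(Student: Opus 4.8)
The plan is to prove Lemma~\ref{lemma3_ext_mon} by induction on $m$, bootstrapping the single-step monotonicity established in Lemma~\ref{lemma2_mon}. The base case $m=1$ is exactly Lemma~\ref{lemma2_mon}, so the content is in the inductive step. Assume $h_k^{(m)}(\cdot)$ is increasing for every $k \in [0,n]$; I want to conclude the same for $h_k^{(m+1)}(\cdot)$. The natural device is to condition on the intermediate state $X_{t+m}$ and use the Markov property to split the $(m+1)$-step transition into an $m$-step transition followed by a single step.

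First I would write, using the Markov property and conditioning on $X_{t+m}$,
\begin{equation}
h_k^{(m+1)}(x) = \p[X_{t+m+1} \geq k \mid X_t = x] = \expec{h_k(X_{t+m}) \,\big|\, X_t = x},
\end{equation}
where $h_k(\cdot)$ is the single-step function from Lemma~\ref{lemma2_mon}, which we already know is nondecreasing. So $h_k^{(m+1)}(x)$ is the expectation of a fixed nondecreasing function of the random variable $X_{t+m}$, started from $X_t = x$. The task thus reduces to showing that this expectation is monotone in the starting point $x$.

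The clean way to get this is stochastic monotonicity: I would argue that the conditional law of $X_{t+m}$ given $X_t = x$ is stochastically increasing in $x$, i.e.\ for $x_1 \leq x_2$ the distribution started from $x_2$ stochastically dominates the one started from $x_1$. But stochastic dominance of $X_{t+m}$ in the initial state is precisely the statement that $\p[X_{t+m} \geq k \mid X_t = x] = h_k^{(m)}(x)$ is nondecreasing in $x$ for every threshold $k$ — which is exactly the inductive hypothesis. Given stochastic dominance, the expectation of any nondecreasing function (here $h_k$) against the dominating law is at least that against the dominated law, so $h_k^{(m+1)}(x_1) \leq h_k^{(m+1)}(x_2)$, completing the induction. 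The final assertion, that $\p[X_t \geq k]$ is increasing in $t$, then follows by taking $x=0$ in $h_k^{(m)}$ together with $X_0 = 0$: since from $X_0=0$ any later state dominates, in particular $X_t$ stochastically dominates $X_{t-1}$, forcing $\p[X_t \geq k] \geq \p[X_{t-1} \geq k]$.

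The main obstacle I anticipate is being careful that ``expectation of a nondecreasing function against a stochastically larger law is larger'' is applied correctly: the single-step function $h_k$ must be nondecreasing (true by Lemma~\ref{lemma2_mon}) \emph{and} the family $\{h_k^{(m)}\}_{k}$ being monotone must be phrased as genuine stochastic dominance of the intermediate state, not merely pointwise monotonicity for one $k$ — this is why the induction must carry the statement for \emph{all} thresholds $k$ simultaneously, which the lemma already does. I would also note the time-homogeneity of the chain $\{X_t\}$ so that the single-step kernel $h_k$ used at the last step does not depend on the epoch index, making the conditioning identity valid.
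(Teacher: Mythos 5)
Your proposal is correct and follows essentially the same route as the paper: induction on $m$, decomposing the $(m+1)$-step transition as $m$ steps followed by one step and writing $h_k^{(m+1)}(x)=\expec{h_k(X_{t+m})\mid X_t=x}$. The only cosmetic difference is that you invoke the standard fact that a nondecreasing function integrated against a stochastically larger law gives a larger value, whereas the paper derives this by hand via an Abel (summation-by-parts) rearrangement; the final claim about $\p[X_t\geq k]$ increasing in $t$ is handled the same way in both.
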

\begin{proof}
    We prove by induction on $m$. We have already proved the result for $m = 1$ in lemma \ref{lemma2_mon}. Now, suppose the result is true for $m = j$. Consider
    \begin{align*}
        &\p[X_{t+j+1} \geq k \mid X_t = x] \\      
        &\begin{multlined}
            =\sum_{y = 0}^n \p[X_{t+j+1} \geq k \mid X_{t+j} = y, X_t = x] \\ \cdot \p[X_{t+j} = y \mid X_t = x ]
        \end{multlined}\\
        % &= \sum_{y = 0}^n \p[X_{t+j+1} \geq k \mid X_{t+j} = y, X_t = x] \cdot \p[X_{t+j} = y \mid X_t = x ] \\
        &= \sum_{y=0}^n \p[X_{t+j+1} \geq k \mid X_{t+j} = y] \cdot \p[X_{t+j} = y \mid X_t = x ].
    \end{align*}
    Denote $a_y = \p[X_{t+j+1} \geq k \mid X_{t+j} = y]$ and $b_y = \p[X_{t+j} = y \mid X_t = x]$. Then, we have
    \begin{align*}
        &\p[X_{t+j+1} \geq k \mid X_t = x] = a_0 b_0 + a_1 b_1 + \hdots + a_n b_n \\
        &\begin{multlined}
            = a_0 \left(\sum_{i=0}^n b_i\right) + (a_1 - a_0)\left(\sum_{i=1}^n b_i\right) \hdots +(a_n - a_{n-1})b_n
        \end{multlined}\\
        % &= a_0 \left(\sum_{i=0}^n b_i\right) + (a_1 - a_0)\left(\sum_{i=1}^n b_i\right) + \hdots + (a_n - a_{n-1})b_n \\
        &\begin{multlined}
            = a_0 \cdot \p[X_{t+j} \geq 0 \mid X_t = x]\\ \qquad \qquad+ \sum_{i=1}^n (a_i - a_{i-1}) \cdot p[X_{t+j} \geq i \mid X_t = x].
        \end{multlined}
        % &= a_0 \p[X_{t+j} \geq 0 \mid X_t = x] + \sum_{i=1}^n (a_i - a_{i-1})\p[X_{t+j} \geq i \mid X_t = x].
        % &= a_0 \p[X_{t+j} \geq 0 \mid X_t = x] + (a_1 - a_0)\p[X_{t+j} \geq 1 \mid X_t = x] \hdots + (a_n - a_{n-1}) \p[X_{t+j} \geq n \mid X_t = x].
    \end{align*}
    By the induction hypothesis, each $\p[X_{t+j} \geq i \mid X_t = x] $ is increasing in $x$, and we also have $a_0 \leq a_1 \hdots \leq a_n$ by lemma \ref{lemma2_mon}. Thus, the above sum is a non-negative linear combination of increasing functions, and hence $\p[X_{t+j+1} \geq k \mid X_t = x]$ is increasing in $x$. By induction, the result is true for all $m \geq 1$. \\
    Now, consider for any $t \geq 0$,
    \begin{align*}
        &\p[X_{t+1} \geq k] \\
        &= \sum_{x = 0}^n \p[X_{t+1} \geq k \mid X_1 = x] \cdot \p[X_1 = x] \\
        &\geq \sum_{x = 0}^n \p[X_{t+1} \geq k \mid X_1 = 0] \cdot \p[X_1 = x] \\
        &=  \p[X_{t+1} \geq k \mid X_1 = 0] \\
        &= \p[X_t \geq k \mid X_0 = 0] \quad = \p[X_t \geq k],
    \end{align*}
    and hence $\p[X_{t} \geq k]$ is increasing in $t$.
\end{proof}

\subsection*{Hitting Time Bound}
Suppose we have a quantum error correcting code that can correct up to $n\beta$ errors. Then, it is a crucial requirement that $X_t \leq n\beta$ after a sufficiently long time, as otherwise we cannot reliably store the system in memory. The quantities that we are most interested in analyzing are $\p[X_t > n\beta]$, and $\expec{\tau}$, where $\tau = \inf_t\{X_t > n\beta\}$ denotes the first time that the number of decohered qubits is more than $n\beta$. In this paper, we only focus on the former.\\
We will prove that if $\beta$ is smaller than a certain threshold, then $\{X_t > n\beta\}$ happens with high probability in constant time. Before proving this result, we state a few helpful lemmas.

\begin{lemma}
\label{lemma4_helper}
Suppose $\beta <  \frac{p - \alpha}{p}$, and let $x \leq n\beta$. Then, we have for any $\delta > 0$,
\begin{multline}
    \p[X_{t+1} > x + (n-x)(p - \delta) - n\alpha \mid X_t = x] \\ \geq 1 - \exp(-2n(1-\beta)\delta^2).
\end{multline}
    % Suppose $\beta < \frac{p - \alpha}{p}$, and let $c = \frac{(p - \alpha - p\beta)^2}{1-\beta}$. Then, for all $x \leq n\beta$, we have
    % \begin{equation}
    %     \p[X_{t+1} > x \mid X_t = x] \geq 1 - \exp(-2nc).
    % \end{equation}
\end{lemma}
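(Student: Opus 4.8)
The plan is to condition on $X_t = x$, unwind the one-step dynamics, and reduce the event $\{X_{t+1} > x + (n-x)(p-\delta) - n\alpha\}$ to a single lower-tail deviation of a binomial that is then controlled by Lemma~\ref{lemma1_conc}. First I would record that, given $X_t = x$, the freshly decohered qubits number $Y_t \sim \bin(n-x, p)$, and by the definition of $f$ the next state is
\begin{equation*}
X_{t+1} = x + Y_t - \min\{x + Y_t,\, n\alpha\} = \big(x + Y_t - n\alpha\big)^+ .
\end{equation*}
Writing $\theta_0 := x + (n-x)(p-\delta) - n\alpha$ for the target threshold, the point is that $\theta_0$ is obtained from the ``raw'' update $x + Y_t - n\alpha$ by replacing the random count $Y_t$ with its shifted mean $(n-x)(p-\delta)$.

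The key step is the set inclusion
\begin{equation*}
\{\,Y_t > (n-x)(p-\delta)\,\} \subseteq \{\,X_{t+1} > \theta_0\,\}.
\end{equation*}
On the left event we have $x + Y_t - n\alpha > x + (n-x)(p-\delta) - n\alpha = \theta_0$, and since $X_{t+1} = (x + Y_t - n\alpha)^+ \ge x + Y_t - n\alpha$, it follows that $X_{t+1} > \theta_0$. I regard the bookkeeping around the truncation $(\cdot)^+$ as the only genuine subtlety here: the inclusion must hold whether or not the correction budget $n\alpha$ is exhausted, i.e.\ irrespective of the sign of $\theta_0$. Using the elementary bound $(\cdot)^+ \ge (\cdot)$ sidesteps a case split on $\theta_0 \gtrless 0$; everything downstream of this is a routine concentration estimate.

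With the inclusion in hand I would lower bound, using the lower-tail half of Lemma~\ref{lemma1_conc} applied to $Y_t \sim \bin(n-x, p)$ with deviation $\delta$,
\begin{align*}
\p[X_{t+1} > \theta_0 \mid X_t = x] &\ge \p[Y_t > (n-x)(p-\delta)] \\
&= 1 - \p[Y_t \le (n-x)(p-\delta)] \\
&\ge 1 - \exp\!\big(-2(n-x)\delta^2\big).
\end{align*}
Finally I would monotonize the exponent: since $x \le n\beta$ we have $n - x \ge n(1-\beta)$, so $\exp(-2(n-x)\delta^2) \le \exp(-2n(1-\beta)\delta^2)$, which gives the claimed bound. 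I note that the hypothesis $\beta < \frac{p-\alpha}{p}$ is not actually invoked in this probabilistic estimate, which uses only $x \le n\beta$; its role is to ensure that the deterministic shift $\theta_0 - x = (n-x)(p-\delta) - n\alpha$ is strictly positive for $x \le n\beta$ and $\delta$ small, so that the lemma certifies genuine growth of $X_t$ at each step, which is exactly what the subsequent hitting-time argument will iterate to reach level $n\beta$ in constant time.
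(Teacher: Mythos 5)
Your proof is correct and follows essentially the same route as the paper's: reduce the event to the lower-tail deviation $\{Y_t > (n-x)(p-\delta)\}$ of the binomial $\bin(n-x,p)$, apply the concentration bound of Lemma~\ref{lemma1_conc}, and use $x \le n\beta$ to monotonize the exponent. Your explicit handling of the truncation via $(\cdot)^+ \ge (\cdot)$ and the observation that $\beta < \frac{p-\alpha}{p}$ is not needed for this estimate are accurate refinements of the same argument.
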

\begin{proof}
    Recall that, conditioned on $X_t = x$, $Y_t \sim \bin(n-x, p)$ is the number of qubits that decohere at time $t$. Clearly, if $Y_t > (n-x)(p-\delta)$, we have $X_{t+1} > x + (n-x)(p - \delta) - n\alpha$ because at most $n\alpha$ qubits are corrected in the correction phase. Therefore,
    \begin{align}
        &\p[X_{t+1} > x + (n-x)(p - \delta) - n\alpha \mid X_t = x] \\
        &\geq \p[Y_t > (n-x)(p - \delta) \mid X_t =x ] \\
        &= \p[\bin(n-x, p) > (n-x)(p - \delta)] \\
        &\geq 1 - \exp(-2(n-x)\delta^2) \hspace{4mm} (\text{from lemma \ref{lemma1_conc}} ) \\
        &\geq 1 - \exp(-2n(1-\beta)\delta^2) \hspace{4mm} (\because x \leq n\beta).
    \end{align}
\end{proof}

\begin{lemma}
    \label{lemma5}
    Let $g:\real \longmapsto \real$ be defined as $ g(x) = x + (n-x)(p - \delta) - n\alpha$, and let $\{x_k\}_{k \geq 0}$ be the sequence defined by $x_0 = 0, $ and $x_{k+1} = g(x_k)$ for all $k \geq 0$. Then,
    \begin{equation}
        x_{k} = \frac{ n(p - \delta - \alpha) }{p - \delta} \big(1 - (1-p+\delta )^k \big), \text{  for all }k \geq 1.
    \end{equation}
    In particular, for $0 < \delta < p - \alpha$, the sequence $\{x_k\}_{k\geq0}$ is monotonically increasing, and converges to the value $\frac{ n(p - \delta - \alpha) }{p - \delta}$.
\end{lemma}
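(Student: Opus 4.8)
The plan is to recognize the recurrence as a first-order affine recurrence, solve it in closed form, and then read off monotonicity and convergence directly from the explicit solution. First I would expand $g$ and collect the coefficient of $x_k$: writing $g(x) = x + (n-x)(p-\delta) - n\alpha = (1 - p + \delta)\,x + n(p - \delta - \alpha)$, the recurrence becomes $x_{k+1} = a\,x_k + b$ with $a := 1 - p + \delta$ and $b := n(p - \delta - \alpha)$.

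Next I would solve this affine recurrence with $x_0 = 0$. The cleanest route is to shift by the fixed point $x^\ast := b/(1-a)$: since $x_{k+1} - x^\ast = a(x_k - x^\ast)$, one gets $x_k - x^\ast = a^k(x_0 - x^\ast) = -a^k x^\ast$, hence $x_k = x^\ast(1 - a^k)$. Substituting $1 - a = p - \delta$ gives $x^\ast = \frac{n(p-\delta-\alpha)}{p-\delta}$ and $a^k = (1-p+\delta)^k$, which is exactly the claimed closed form. Equivalently, a one-line induction on $k$ verifies the formula against the recurrence without invoking the general solution.

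Finally I would derive the monotonicity and convergence from the closed form. Under the hypothesis $0 < \delta < p - \alpha$ (together with $\alpha \ge 0$ and $p \le 1$) the two sign facts that control the behaviour are: \emph{(i)} $0 < a < 1$, because $\delta < p$ gives $p - \delta > 0$ and hence $a = 1 - (p-\delta) < 1$, while $\delta > 0$ and $p \le 1$ give $p - \delta < 1$ and hence $a > 0$; and \emph{(ii)} $x^\ast > 0$, because $p - \delta - \alpha > 0$ and $p - \delta > 0$. From \emph{(i)}, $a^k$ decreases monotonically to $0$, so $1 - a^k$ increases to $1$; combined with \emph{(ii)} this shows $x_k = x^\ast(1 - a^k)$ increases monotonically to its limit $x^\ast = \frac{n(p-\delta-\alpha)}{p-\delta}$.

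The argument is entirely routine and I do not expect a genuine obstacle. The only step requiring a little care is the verification of the sign conditions in \emph{(i)} and \emph{(ii)}: this is precisely where the stated range $0 < \delta < p - \alpha$ is used, as it is exactly what forces the contraction factor $a$ into the open interval $(0,1)$ and keeps the fixed point $x^\ast$ positive.
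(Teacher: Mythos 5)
Your proposal is correct and follows essentially the same route as the paper: identify the recurrence as affine, $x_{k+1}=Ax_k+B$ with $A=1-p+\delta$ and $B=n(p-\delta-\alpha)$, obtain the closed form (the paper does this by induction on the partial geometric sum, you by shifting to the fixed point — the same computation presented differently), and then read off monotonicity and convergence from $0<A<1$ and $B>0$. No gaps.
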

\begin{proof}
    Note that $g(x) = Ax + B$, where $A = (1-p+\delta), B = n(p-\delta -\alpha)$. We show by induction that $x_k = B(A^{k-1} + \hdots + A + 1) $ for all $k \geq 1$. Clearly, the statement is true for $k=1$ because $x_1 = A\cdot 0 + B = B$. Now suppose it is true for some $k \geq 1$. Consider
    \begin{align}
        x_{k+1} &= A x_k + B \\
        &= B(A^k + \hdots + A^2 + A) + B \\
        &= B(A^k + \hdots + A + 1),
    \end{align}
    and thus the statement is true for $k+1$ as well. By induction, the statement is true for all $k \geq 1$. In particular, we may write $A^{k-1} + \hdots + A + 1 = \frac{1 - A^k}{1 - A}$, to get $x_k = \frac{B}{1-A}(1 - A^k)$, which gives us the desired formula.

    For $0 < \delta < p - \alpha$, we have $B = n(p-\alpha-\delta) > 0$ and $0< A=1-p+\delta < 1$, and thus $(1-A^k)$ is monotonically increasing and converges to $1$ $\implies \{x_k\}$ is monotonically increasing and converges to $\frac{B}{1-A} = \frac{n(p-\delta - \alpha)}{p - \delta}$.
\end{proof}
Now we derive the main result. Let $\beta < \frac{p-\alpha}{p}$, and choose any $\delta$ that satisfies $0 < \delta < p - \frac{\alpha}{1 - \beta}$. Define the sequence $\{x_k\}_{k \geq 0} $ as in lemma \ref{lemma5}. Then, we have $\delta - \delta \beta < p - p \beta - \alpha \implies \beta < \frac{p-\alpha-\delta}{p-\delta} = \frac{1}{n} \lim_{k \rightarrow \infty} x_k$. Let $T \in \mathbb{N}$ be the smallest number such that $x_k > n\beta$ for all $k \geq T$; such a $T$ exists because of lemma \ref{lemma5}, and more importantly it does not depend on $n$. We can compute $T$ as follows:
\begin{align}
     &\frac{ n(p - \delta - \alpha) }{p - \delta} \big(1 - (1-p+\delta )^k \big) > n\beta \\
     \Longleftrightarrow& (1-p+\delta)^k < 1 - \frac{\beta(p - \delta)}{p - \delta - \alpha} \\
     \Longleftrightarrow& k > \frac{ \log(\big( p - \alpha - p\beta - \delta(1-\beta)\big ) - \log(p - \alpha - \delta)}{\log(1-p+\delta)},
\end{align}
and thus $T = \Big\lceil \frac{ \log\big( p - \alpha - p\beta - \delta(1-\beta)\big ) - \log(p - \alpha - \delta)}{\log(1-p+\delta)} \Big\rceil $. Note that $T$ depends on $p, \alpha, \beta$, as well as our choice of $\delta$. We may fix $\delta = \frac{1}{2} \left( p - \frac{\alpha}{1 - \beta} \right)$, in which case we obtain
\begin{equation}
    T(p,\alpha, \beta) = \left\lceil \frac{ \log\left( \frac{p - \alpha - p\beta}{2}\right) - \log\left(\frac{p}{2} - \alpha + \frac{\alpha}{2(1-\beta)}\right)}{\log\left(1-\frac{p}{2} - \frac{\alpha}{2(1-\beta)}\right)} \right\rceil. 
\end{equation}
We now state the following theorem.
\begin{theorem}
\label{thm:hittingTimeWHP}
Suppose $\beta < \frac{p-\alpha}{p}$, and define $T = T(p,\alpha,\beta)$ as above. Then, we have
\begin{equation}
    \p[X_t > n\beta] \geq \big( 1 - \exp(-2n(1-\beta)\delta^2)\big)^T,
\end{equation}
for all $t \geq T$. \\
In other words, for a given $\beta < \frac{p-\alpha}{p} $, the number of erroneous qubits crosses $n\beta$ in constant time (independent of $n$) with high probability (in the limit $n \longrightarrow \infty$).
    % Suppose $\beta < \frac{p - \alpha}{p}$, and let $c = \frac{(p - \alpha' - p\beta)^2}{1-\beta}$. Then,
    % \begin{equation}
    %     \p[X_{t} > n\beta] \geq \big(1 - \exp(-2nc)\big)^{n\beta + 1},
    % \end{equation}
    % for all $t \geq n\beta + 1 $.
\end{theorem}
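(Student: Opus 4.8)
The plan is to show that the random process $\{X_t\}$ stays above the deterministic sequence $\{x_k\}$ of Lemma \ref{lemma5} with high probability at every epoch, so that once this sequence overshoots $n\beta$ (which it does at the constant epoch $T$), the random process does too. Write $\eta := \exp(-2n(1-\beta)\delta^2)$ for brevity. My target will be the bound $\p[X_T > x_T] \geq (1-\eta)^T$. Since $T$ is chosen precisely so that $x_T > n\beta$ (Lemma \ref{lemma5}), the set inclusion $\{X_T > x_T\} \subseteq \{X_T > n\beta\}$ then gives $\p[X_T > n\beta] \geq (1-\eta)^T$ for free. Finally, because $\p[X_t \geq k]$ is non-decreasing in $t$ (the last assertion of Lemma \ref{lemma3_ext_mon}), the bound established at the single epoch $t=T$ propagates to all $t \geq T$, which is exactly the claim.

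The heart of the argument is an induction on $t$ establishing $\p[X_t > x_t] \geq (1-\eta)^t$ for $1 \leq t \leq T$. The base case $t=1$ is immediate from Lemma \ref{lemma4_helper} applied at $x = X_0 = 0$, since $g(0) = x_1$. For the inductive step I would condition on $X_t$ and restrict the sum to the favourable values,
\begin{equation*}
\p[X_{t+1} > x_{t+1}] \geq \sum_{x > x_t} \p[X_{t+1} > x_{t+1} \mid X_t = x]\, \p[X_t = x],
\end{equation*}
so it suffices to prove the uniform conditional estimate $\p[X_{t+1} > x_{t+1} \mid X_t = x] \geq 1-\eta$ for every $x > x_t$; the induction hypothesis then yields $\p[X_{t+1} > x_{t+1}] \geq (1-\eta)\,\p[X_t > x_t] \geq (1-\eta)^{t+1}$.

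Establishing this uniform conditional bound is the crux, and it separates into two regimes according to whether $X_t$ lies below or above the correctable threshold $n\beta$. For $x_t < x \leq n\beta$ I would invoke Lemma \ref{lemma4_helper} directly: it furnishes $\p[X_{t+1} > g(x) \mid X_t = x] \geq 1-\eta$, and since $g$ is affine with slope $1-p+\delta \in (0,1)$ hence increasing, $x > x_t$ forces $g(x) \geq g(x_t) = x_{t+1}$, so $\{X_{t+1} > g(x)\} \subseteq \{X_{t+1} > x_{t+1}\}$ and the bound follows. This is the step that \emph{moves} the deterministic comparison point forward along the sequence $\{x_k\}$.

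The regime $x > n\beta$ is exactly where Lemma \ref{lemma4_helper} is silent, and I expect this to be the main obstacle, since $\{X_t\}$ is not monotone in $t$ and can in principle exceed $n\beta$ at intermediate epochs. I would dispose of it using the stochastic monotonicity of the chain, Lemma \ref{lemma2_mon}: the map $x \mapsto \p[X_{t+1} > x_{t+1} \mid X_t = x]$ is non-decreasing. By the previous regime this map already attains a value of at least $1-\eta$ at the boundary point $x = n\beta$ (using $n\beta \geq x_t$ for $t \leq T-1$), so monotonicity forces it to remain $\geq 1-\eta$ for all larger $x$ as well. One must here track that $X_t$ is integer valued while $n\beta$ and $x_t$ need not be, so the comparison point is taken as the appropriate integer near $n\beta$; these are routine adjustments that leave the bound unchanged. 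Combining the two regimes gives the uniform estimate, completing the induction and hence the theorem.
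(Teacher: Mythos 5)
Your proposal is correct and follows essentially the same route as the paper: the same induction establishing $\p[X_t > x_t] \geq (1-\eta)^t$ along the deterministic comparison sequence of Lemma \ref{lemma5}, closed out by $x_T > n\beta$ and the monotonicity in $t$ from Lemma \ref{lemma3_ext_mon}. The only cosmetic difference is in the inductive step, where the paper applies Lemma \ref{lemma2_mon} once to reduce every $x > x_{t-1}$ to the single point $x_{t-1}$ and then invokes Lemma \ref{lemma4_helper} there, whereas you split into the regimes $x \leq n\beta$ and $x > n\beta$; both yield the same uniform conditional bound.
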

\begin{proof}
Let $\{x_k\}_{k \geq 0}$ be the sequence as defined in lemma \ref{lemma5}. For any $1 \leq t \leq T$, we have $x_{t-1} \leq n\beta$ by definition. Hence, consider
\begin{align}
    &\p[X_t > x_t] \\
    &\geq \p[X_t > x_t , X_{t-1} > x_{t-1}] \\
    &= \sum_{x > x_{t-1}} \p[X_t > x_t \mid X_{t-1} = x] \cdot \p[X_{t-1} = x] \\
    &\begin{multlined}
        \geq \sum_{x > x_{t-1}} \p[X_t > x_t \mid X_{t-1} = x_{t-1}] \cdot \p[X_{t-1} = x] \\ \hspace{3mm} (\text{from lemma \ref{lemma2_mon}})
    \end{multlined}\\
    % &\geq \sum_{x > x_{t-1}} \p[X_t > x_t \mid X_{t-1} = x_{t-1}] \cdot \p[X_{t-1} = x] \hspace{3mm} (\text{from lemma \ref{lemma2_mon}})\\
    &\begin{multlined}
        \geq  \sum_{x > x_{t-1}} \big(1 - \exp(-2n(1-\beta)\delta^2)  \big) \cdot \p[X_{t-1} \geq x] \\ \hspace{3mm} (\text{from lemma \ref{lemma4_helper}}) 
    \end{multlined}\\
    % &\geq  \sum_{x > x_{t-1}} \big(1 - \exp(-2n(1-\beta)\delta^2)  \big) \cdot \p[X_{t-1} \geq x] \hspace{3mm} (\text{from lemma \ref{lemma4_helper}}) \\
    &=  \big(1 - \exp(-2n(1-\beta)\delta^2)  \big) \cdot \p[X_{t-1} > x_{t-1}],
\end{align}
and since $\p[X_1 > x_1] = \p[X_1 > x_1 \mid X_0 = 0 = x_0] > 1 - \exp(-2n(1-\beta)\delta^2)$, we have from the above recursion,
\begin{equation}
    \p[X_t > x_t] \geq \big(1 - \exp(-2n(1-\beta)\delta^2)  \big)^t,
\end{equation}
for all $1 \leq t \leq T$. In particular, substituting $t = T$ and using the fact that $x_T > n\beta$, we have
\begin{equation}
    \p[X_T > n\beta] \geq \p[X_T > x_T] \geq \big(1 - \exp(-2n(1-\beta)\delta^2)  \big)^T,
\end{equation}
and hence from lemma \ref{lemma3_ext_mon}, it follows that $\p[X_t > n\beta] \geq \big(1 - \exp(-2n(1-\beta)\delta^2)  \big)^T$ for all $t \geq T$, as desired.

\end{proof}

\subsection*{Obtaining the bounds in Theorem \ref{thm:overheadThreshold}}
This proof builds on the bound in Theorem~\ref{thm:hittingTimeWHP}. Since $\beta<\frac{p-\alpha}{p}$ let us take $\beta=\frac{p-\alpha}{p}-\theta$. After  $T$ epochs (where $T \equiv T(p, \alpha, \beta)$ is independent of $l$), Theorem \ref{thm:hittingTimeWHP} tells us that the probability of having more than $n\beta$ errors is lower bounded by $(1-\exp(-c_1 n))^T$. This quantity is arbitrarily close to $1$ for sufficiently large $l$ since $n\ge l$.

% Note that for large circuits (large $n$), Theorem~\ref{thm:hittingTimeWHP} implies that with probability almost one, the number of uncorrected errors after constant time would be at least $n\beta$, for $\beta < \frac{p-\alpha}{p}$. Therefore, we must necessarily have $\beta \geq \frac{p - \alpha}{p}$ in order to ensure that a quantum code that can correct up to $n\beta$ errors can recover the quantum state after a sufficiently long time.

Since the errors are i.i.d. and the gate operations do not prioritize among errors, these $n\beta$ errors are uniformly distributed across $n$ qubits. Therefore, in order to recover memory after time $T(p, \alpha, \beta)$, it is necessary to have a quantum code that can correct up to $n\beta$ uniform errors. For $\theta<\frac{p-\alpha}{p}$, for all $t \ge C(\alpha,p,\theta)=T(p, \alpha, \frac{p-\alpha}{p}-\theta)$ (independent of $l$) and for all sufficiently large $l$ (and hence $n$, since $n \ge l$), there will be at least $n\beta = n\left(\frac{p-\alpha}{p}-\theta\right)$ uniform errors with probability almost $1$. Hence,  in order to recover memory after an arbitrarily long (but constant) time, we must have a quantum code that can correct up to $n\left(\frac{p-\alpha}{p}-\theta\right)$ uniform errors.

% To prove that memory under erasure cannot be recovered after a constant time (phases) using any quantum code with overhead less than $\frac{p}{2\alpha - p}$, we connect recovery under uniform errors to that under i.i.d. errors.

Note that both for erasure and depolarizing noise, if an error occurs on a qubit the state becomes erasure or $\frac{I}{2}$. A further error on this qubit does not change the state.   In the following proof, this fact is used implicitly. This allows us to track the number of errors and not worry about the nature of the errors. Since the nature of errors are all the same, either erasure or $\frac{I}{2}$. This is not true for general error models. We first consider erasure errors. The proof for depolarizing is almost similar.

Consider an i.i.d. quantum erasure channel with probability of erasure $\beta-\gamma$. Suppose that there is a code with rate strictly more than $1-2\beta$, that can correct uniform $n\beta$ erasures. For any $\gamma>0$, and for all sufficiently large $n$, the number of erasures in i.i.d. erasure channel is upper-bounded by $n\beta$ with probability $1-\exp(-2 n\gamma^2)$, by lemma \ref{lemma1_conc}. Further, by the property of i.i.d. Bernoulli coin toss, given that a certain number of erasures have happened, the erasures are uniformly distributed. Thus, the code for uniform errors can recover a state with probability $1-\exp(-2 n \gamma^2)$ over the i.i.d. $\beta-\gamma$ erasure channel. This implies that we have a code with rate strictly more than $1-2\beta$ for an i.i.d. $\beta-\gamma$ erasure channel. By taking $\gamma$ arbitrarily close to zero, we see that the capacity result of i.i.d. erasure channels is violated. Thus, by contradiction, we get that $n\beta$ uniform erasures cannot be corrected using a code with rate more than $1-2\beta$. Therefore, any code that can correct $n\beta$ uniform erasures must have overhead of at least $\frac{1}{1 - 2\beta}=\frac{p}{2\alpha-p+2\theta p}$.

{Since the quantum capacity of an i.i.d. $\beta$ erasure channel vanishes at $\beta=\frac{1}{2}$, we cannot have a quantum code  that can correct more than $\frac{n}{2}$ uniform erasures. Therefore, if $\frac{p-\alpha}{p} -\theta > \frac{1}{2}$ for any $\theta>0$ we cannot keep the memory corrected for a time more than $C(\alpha,p,\theta)$, which is a constant independent of $l$. In particular, if $\alpha<\frac{p}{2}$, i.e., $\frac{p-\alpha}{p}>\frac{1}{2}$, there exists a $\theta>0$ such that $\frac{p-\alpha}{p}-\theta>\frac{1}{2}$. In this case, no quantum error correction scheme can keep the memory corrected for any time greater than $C(\alpha,p,\theta)$. This proves that when $\alpha<\frac{p}{2}$ no fault-tolerance scheme can keep the memory corrected for an arbitrarily large time.}

A similar argument gives the result for depolarising noise to lead to the lower bound for keeping memory corrected for a time $C(\alpha,p,\theta)$: $n \ge \frac{l}{Q_{\mbox{dep}}\left(\frac{p-\alpha}{p}-\theta\right)}$ for $\frac{2p}{3}<\alpha<p$. The parallelization lower bound (on $\alpha$) is obtained by observing the fact that quantum capacity of i.i.d. $\beta$ depolarizing channel vanishes at $\beta=\frac{1}{3}$.

\subsection*{Bounds are monotonic in $q$}
In the more capable error correction framework, consider two different memories with decoherence probabilities during static phase being $q$ and $q'$ respectively, but with the same $p$ for the error correction phase, and let $q'>q>0$. We model errors as i.i.d. during static phase as well as error correction phase. We consider the following error generation model for $q$ which is stochastically coupled with the i.i.d. error generation model of $q'$-memory. 

The errors generated in each error correction phase of $q$-memory are the same as that of $q'$-memory. On the other hand, for every error in every static phase of $q'$-memory, a coin with probability of head $\frac{q}{q'}$ is tossed independently. If it is head, the same error is created in $q$-memory at the same location.

It can be seen that under this coupled error generation model, the error distributions of  $q$-memory and $q'$-memory remain faithful to their original  distributions. However, on each sample path (realization) their errors become coupled \cite{MarkovChainHittingTime}. 

Clearly, $q'$-memory have more errors than $q$ on each sample path. Thus, if a code allow accurate retrieval from $q'$-memory, then the same code will allow accurate retrieval from $q$-memory. Therefore, if no  code with overhead $<x$ can correct $q$-memory, no code with overhead $<x$ can correct $q'$-memory. This shows that the lower bounds on space overhead are monotonic in $q$. A similar argument implies that the lower bounds on the required level of parallelization are monotonic in $q$.

\end{document}